\newtheorem{theorem}{Theorem}[section]
\newtheorem{lemma}[theorem]{Lemma}
\newtheorem{corollary}[theorem]{Corollary}
\newtheorem{definition}[theorem]{Definition}
\newtheorem{remark}[theorem]{Remark}
\DeclareMathOperator{\Tr}{Tr}
\newcommand{\Id}{\mathbb{I}}
\newcommand{\E}{\mathbb{E}}
\newcommand{\C}{\mathbb{C}}
\newcommand{\calF}{\mathcal{F}}
\newcommand{\Uqram}{U_{\mathrm{UQRAM}}}
\newcommand{\Mf}{M_f}
\newcommand{\psif}{\psi_f}
\newcommand{\rhop}{\rho_p}
\newcommand{\Piperp}{\Pi_{\perp}}
\title{Bias-Class Discrimination of Universal QRAM Boolean Memories}
\author{Leonardo Bohac}
\date{}
\begin{document}

\maketitle

\begin{abstract}
We study the discrimination of Boolean memory configurations via a fixed Universal QRAM (U-QRAM) interface. 
Given query access to a quantum memory storing an unknown Boolean function $f:[N]\to\{0,1\}$, 
we ask: what can be inferred about the \emph{bias class} of $f$ (its imbalance from $1/2$, up to complement symmetry) using coherent, addressable queries?
We show that for exact-weight bias classes, the induced \emph{single-query} ensemble state on the address register has a two-eigenspace structure
that yields closed-form expressions for the \emph{single-copy} Helstrom-optimal measurement and success probability.
Because complementing $f$ changes $\ket{\psi_f}$ only by a global phase, hypotheses $p$ and $1-p$ are information-theoretically identical in this model; thus the natural discriminand is the \emph{phase-bias magnitude} $|\mu|$ (equivalently $\mu^2$).
This goes beyond the perfect-discrimination case of Deutsch--Jozsa~\cite{DeutschJozsa92} and complements exact-identification settings such as Bernstein--Vazirani~\cite{BernsteinVazirani97}.
\end{abstract}

\section{Introduction}

Quantum random-access memory (QRAM) has been proposed as a key primitive for quantum algorithms 
that process classical data in superposition~\cite{GLM08}. However, the primary value of QRAM 
is not as a ``fast data loader'' but as a \emph{structured query interface} to quantum memory.

In this work, we study U-QRAM from the perspective of \emph{state discrimination}: 
given query access to an unknown memory configuration via a fixed U-QRAM interface, 
what properties of the memory can be inferred?

This paper continues a program that treats U-QRAM as a \emph{fixed, data-independent interface} whose inputs are different \emph{memory states} (rather than different unitaries), following the construction in~\cite{BohacUqram25}.
It is also complementary to single-query \emph{identification} results for Boolean families (where the goal is to output $f$ itself) developed in~\cite{BohacClosedForm25}.

\paragraph{Why this is not ``just Deutsch--Jozsa.''}
While the optimal single-copy measurement turns out to be as simple as a $\ket{+}$-vs-$\perp$ test, the point of this paper is that this simplicity is \emph{forced by symmetry} of the \emph{U-QRAM induced ensemble} for exact-weight classes.
We isolate this maximally symmetric base case (uniform prior over exact-weight truth tables), obtain the induced mixed states in closed form, and thereby pin down the \emph{information accessible through the fixed interface} as a function of the bias parameter.
This establishes a baseline for follow-ups where the symmetry is broken (approximate weights, noise, non-uniform priors, or genuinely quantum memories), for which Helstrom tests need not remain trivial.

\paragraph{Related work and scope.}
There is prior literature on discriminating \emph{sets} of Boolean functions using quantum-state filtering/discrimination methods (e.g., the Bergou--Hillery line of work), which studies different promise structures and objectives and shares the same core move of mapping a Boolean function to a phase-kickback ``function state'' and then applying quantum-state discrimination/filtering tools~\cite{BergouHillery05}.
Separately, the Deutsch--Jozsa setting has also been framed as discrimination among transformations/oracles in the unitary-discrimination literature~\cite{CollinsDJ}.
Our focus here is the \emph{U-QRAM induced ensemble} for exact-weight bias classes and the resulting \emph{closed-form single-copy} Helstrom test.
For repeated queries we analyze an explicit \emph{separable} strategy under the physically natural ``memory fixed'' model; we do not claim global optimality over all collective or fully coherent oracle algorithms.
In particular, in the standard controlled-oracle query model one can obtain different $\varepsilon$-scalings via amplitude estimation~\cite{BHMT00} (see Remark~\ref{rem:amp-est}).
We view the exact-weight setting as the canonical ``maximal symmetry'' starting point: it is precisely this symmetry that makes the induced states commuting and the Helstrom test explicit.

Our main contributions are:
\begin{enumerate}
    \item[(i)] A reduction from Universal QRAM queries to phase-oracle address states (Lemma~\ref{lem:reduction});
    \item[(ii)] A closed form for the induced ensemble state $\rho_p$ over exact-weight bias classes (Theorem~\ref{thm:ensemble-state});
    \item[(iii)] Optimal measurements and success probabilities for binary bias discrimination in the ensemble model (Theorem~\ref{thm:discrimination});
    \item[(iv)] An achievable multi-query strategy under persistent-memory sampling, with explicit error exponent (Theorem~\ref{thm:multi-query}).
\end{enumerate}

\paragraph{Ensemble hypothesis-testing viewpoint.}
Throughout, we adopt an \emph{ensemble (Bayesian) model} of restricted-access discrimination.
Under a hypothesis $H_p$ (or $H_m$ with $p=m/N$), Nature draws a Boolean function $f$ \emph{once} from a known distribution on the corresponding class (here: uniform on exact-weight truth tables), prepares the preserved memory register in $\ket{M_f}$, and keeps it fixed for the duration of the experiment.
The experimenter cannot access $\ket{M_f}$ directly; they may only query it through the fixed unitary $\Uqram$, producing address states $\ket{\psi_f}$.
As a result, the discrimination task is between the \emph{induced mixed states} $\rho_p$ (averaged over $f$ under the hypothesis), and our single-copy optimality results characterize optimal \emph{average-case} performance rather than worst-case guarantees for every $f$ in a class.
For background on minimum-error quantum hypothesis testing, see~\cite{Helstrom76,Holevo82}.
Moreover, complement symmetry is built in: since $\ket{\psi_{1-f}}=-\ket{\psi_f}$, any procedure in this model can depend on $p$ only through $|\mu|$ (equivalently $\mu^2$).

\section{The Universal QRAM Model}
\label{sec:model}

\subsection{Registers and Memory Encoding}

We consider a U-QRAM system with three quantum registers:
\begin{itemize}
    \item \textbf{Address register} $A$: $n$ qubits, $N = 2^n$ addresses.
    \item \textbf{Data/output register} $D$: 1 qubit (for $K=1$ bit per address).
    \item \textbf{Memory register} $M$: $N$ qubits storing the truth table.
\end{itemize}

A Boolean function $f:[N]\to\{0,1\}$ is encoded in the memory register as the basis state
\begin{equation}
    \ket{\Mf} := \bigotimes_{a=0}^{N-1} \ket{f(a)}_M,
\end{equation}
which we call the \emph{truth-table encoding}.

\subsection{The U-QRAM Unitary}

We model U-QRAM access as a fixed unitary $\Uqram$ acting on $A\otimes D\otimes M$.
This idealized query semantics is consistent with the data-independent U-QRAM interface viewpoint developed in~\cite{BohacUqram25}.

\begin{definition}[U-QRAM query on basis memories]
For any computational-basis memory state
\[
    \ket{m}_M := \bigotimes_{j=0}^{N-1} \ket{m_j}_M,\qquad m_j\in\{0,1\},
\]
the U-QRAM unitary is specified by
\begin{equation}\label{eq:uqram-general}
    \Uqram \ket{a}_A \ket{y}_D \ket{m}_M
    = \ket{a}_A \ket{y \oplus m_a}_D \ket{m}_M ,
\end{equation}
and extended by linearity to arbitrary memory states.
\end{definition}

\begin{remark}[Unitarity]
The map in Eq.~\eqref{eq:uqram-general} permutes computational-basis states and satisfies $\Uqram^2=\Id_{A\otimes D\otimes M}$; hence it extends uniquely to a unitary on $A\otimes D\otimes M$.
\end{remark}

We emphasize that Eq.~\eqref{eq:uqram-general} defines an idealized U-QRAM query for computational-basis memory contents; when $M$ is in superposition over basis memories, the same linear extension generally entangles $A$ and $M$ (cf.\ Remark~\ref{rem:superposed-memories}).

For a Boolean function $f:[N]\to\{0,1\}$ we use the truth-table basis encoding $\ket{\Mf}_M$ defined above,
so that Eq.~\eqref{eq:uqram-general} specializes to
\begin{equation}\label{eq:uqram}
    \Uqram \ket{a}_A \ket{y}_D \ket{\Mf}_M
    = \ket{a}_A \ket{y \oplus f(a)}_D \ket{\Mf}_M .
\end{equation}

\begin{remark}[Key properties]
The unitary $\Uqram$ satisfies:
\begin{enumerate}
    \item[(a)] \textbf{Fixed architecture:} $\Uqram$ is data-independent; different memories correspond to different input states on $M$, not different unitaries.
    \item[(b)] \textbf{Memory preservation:} the memory register is unchanged by a query.
    \item[(c)] \textbf{Address locality:} the data/output depends only on the addressed location.
\end{enumerate}
\end{remark}

\subsection{Reduction to Phase Oracle and Address States}

\begin{lemma}[U-QRAM--Phase-Oracle Reduction]
\label{lem:reduction}
Let $\ket{+} = \frac{1}{\sqrt{N}} \sum_{a=0}^{N-1} \ket{a}$ and $\ket{-} = \frac{1}{\sqrt{2}}(\ket{0} - \ket{1})$.
Then:
\begin{equation}
    \Uqram \big( \ket{+}_A \ket{-}_D \ket{\Mf}_M \big) 
    = \ket{\psif}_A \otimes \ket{-}_D \otimes \ket{\Mf}_M,
\end{equation}
where the \emph{address state} is
\begin{equation}
\label{eq:address-state}
    \ket{\psif} := P_f \ket{+} = \frac{1}{\sqrt{N}} \sum_{a=0}^{N-1} (-1)^{f(a)} \ket{a},
\end{equation}
and $P_f = \sum_a (-1)^{f(a)} \ket{a}\bra{a}$ is the phase oracle.
\end{lemma}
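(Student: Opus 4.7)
The plan is to reduce the claim to the standard phase-kickback identity, using the U-QRAM query semantics of Eq.~\eqref{eq:uqram} only on the computational-basis memory $\ket{\Mf}_M$ (so no superposition on $M$ is ever invoked). First I would expand $\ket{+}_A = \tfrac{1}{\sqrt{N}}\sum_{a=0}^{N-1}\ket{a}_A$ and use linearity of $\Uqram$ to reduce the lemma to computing $\Uqram\ket{a}_A\ket{-}_D\ket{\Mf}_M$ for a generic address basis state $\ket{a}$.

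Next, writing $\ket{-}_D = \tfrac{1}{\sqrt{2}}(\ket{0}_D-\ket{1}_D)$ and applying Eq.~\eqref{eq:uqram} to each of the two data-basis pieces produces the data-register vector $\tfrac{1}{\sqrt{2}}(\ket{f(a)}_D-\ket{1\oplus f(a)}_D)$ while leaving $\ket{a}_A$ and $\ket{\Mf}_M$ untouched (this uses memory preservation, property (b)). The key step is then the elementary two-case identity
\[
\tfrac{1}{\sqrt{2}}\bigl(\ket{b}-\ket{1\oplus b}\bigr) = (-1)^b \ket{-}, \qquad b\in\{0,1\},
\]
which converts the XOR action on $D$ into a global phase $(-1)^{f(a)}$ on the $a$-th address branch — the phase-kickback trick.

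Finally, I would reassemble the sum over $a$, pulling the phases onto the address register and factorizing $D$ and $M$ out, to obtain
\[
\Uqram\bigl(\ket{+}_A\ket{-}_D\ket{\Mf}_M\bigr) = \Bigl(\tfrac{1}{\sqrt{N}}\sum_{a=0}^{N-1} (-1)^{f(a)}\ket{a}_A\Bigr)\otimes\ket{-}_D\otimes\ket{\Mf}_M,
\]
which matches the definition of $\ket{\psif}$ in Eq.~\eqref{eq:address-state}.

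There is no substantive obstacle here: the argument is the familiar XOR-to-phase reduction, adapted to the U-QRAM interface. The only point worth being explicit about is that throughout the calculation $M$ remains in the fixed basis state $\ket{\Mf}_M$, so Eq.~\eqref{eq:uqram} applies directly and $A$ never becomes entangled with $M$ — exactly the regime of the lemma, and consistent with the caveat in Remark~\ref{rem:superposed-memories} that the reduction is special to basis-encoded memories.
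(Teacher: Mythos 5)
Your proposal is correct and follows essentially the same route as the paper's proof: expand $\ket{+}$ by linearity, apply the U-QRAM query rule to each address basis state, use the two-case identity $\tfrac{1}{\sqrt{2}}(\ket{b}-\ket{1\oplus b})=(-1)^b\ket{-}$ to kick the phase back onto the address register, and reassemble the sum. No gaps; the explicit remark that $M$ stays in a basis state so Eq.~\eqref{eq:uqram} applies directly is a nice touch but matches the paper's framing.
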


\begin{proof}
Apply $\Uqram$ to the probe state:
\begin{align}
    \Uqram \big( \ket{+}_A \ket{-}_D \ket{\Mf}_M \big)
    &= \frac{1}{\sqrt{N}} \sum_a \Uqram \ket{a} \ket{-} \ket{\Mf} \\
    &= \frac{1}{\sqrt{N}} \sum_a \ket{a} \cdot \frac{1}{\sqrt{2}}\big(\ket{0 \oplus f(a)} - \ket{1 \oplus f(a)}\big) \cdot \ket{\Mf}.
\end{align}
For each $a$:
\[
    \ket{0 \oplus f(a)} - \ket{1 \oplus f(a)} = 
    \begin{cases}
        \ket{0} - \ket{1} = \sqrt{2}\ket{-} & \text{if } f(a) = 0,\\
        \ket{1} - \ket{0} = -\sqrt{2}\ket{-} & \text{if } f(a) = 1.
    \end{cases}
\]
Thus $\ket{-} \mapsto (-1)^{f(a)} \ket{-}$, and
\[
    \Uqram \big( \ket{+} \ket{-} \ket{\Mf} \big) 
    = \bigg( \frac{1}{\sqrt{N}} \sum_a (-1)^{f(a)} \ket{a} \bigg) \otimes \ket{-} \otimes \ket{\Mf}
    = \ket{\psif} \otimes \ket{-} \otimes \ket{\Mf}.
\]
\end{proof}

\begin{corollary}[Conditionally i.i.d.\ repeated queries]
\label{cor:iid}
Conditioned on a fixed (but unknown) $f$ and the truth-table basis encoding $\ket{\Mf}_M$, each query with probe $\ket{+}_A\ket{-}_D$ produces the product output
\[
    \ket{\psif}_A \otimes \ket{-}_D \otimes \ket{\Mf}_M.
\]
Hence, by reinitializing $A,D$ and reapplying $\Uqram$, the experimenter can generate
$\ket{\psif}^{\otimes t}$ on $t$ fresh address registers (conditioned on $f$).
\end{corollary}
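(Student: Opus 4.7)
The single-query statement is a direct specialization of Lemma~\ref{lem:reduction}, so the substantive content is the iteration. My plan is a short induction on $t$. The base case $t=1$ is Lemma~\ref{lem:reduction} itself: preparing $\ket{+}_A\ket{-}_D$ and applying $\Uqram$ yields the tensor product $\ket{\psif}_A \otimes \ket{-}_D \otimes \ket{\Mf}_M$. The crucial observation is that this output is \emph{unentangled} across $A$, $D$, $M$, so the memory register remains in the pure basis state $\ket{\Mf}_M$ --- exactly the precondition required to invoke the lemma again.

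For the inductive step, I would model ``reinitializing $A,D$'' as appending fresh registers $A_{t+1}, D_{t+1}$ prepared in $\ket{+}\ket{-}$, disjoint from the $t$ previously produced address registers $A_1,\ldots,A_t$ (each holding $\ket{\psif}$). Applying $\Uqram$ on the triple $(A_{t+1}, D_{t+1}, M)$ while the other registers sit idle acts as $\Uqram \otimes \Id$ on the joint state, so Lemma~\ref{lem:reduction} applies verbatim to the new triple and outputs $\ket{\psif}_{A_{t+1}} \otimes \ket{-}_{D_{t+1}} \otimes \ket{\Mf}_M$. Tensoring with the $t$ already-produced copies closes the induction and gives $\ket{\psif}^{\otimes (t+1)}$ on $A_1,\ldots,A_{t+1}$ together with $\ket{\Mf}_M$ untouched.

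There is no real obstacle; the only point I would flag is that memory preservation as a \emph{basis} state --- not merely as a reduced density matrix --- is what makes the queries literally conditionally i.i.d. If $M$ were in a superposition of distinct truth-table encodings, the post-query global state would generically entangle $A$ with $M$ (cf.\ the remark on superposed memories), and successive queries would correlate through the memory. Conditioning on a fixed $f$ and the basis encoding $\ket{\Mf}_M$ sidesteps this entirely and is the reason the reduction to $t$-fold products is exact rather than approximate.
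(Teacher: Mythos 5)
Your proposal is correct and matches the argument the paper intends (the corollary is stated without an explicit proof, as an immediate iteration of Lemma~\ref{lem:reduction}): the output of each query is an unentangled product with $M$ still in the basis state $\ket{\Mf}_M$, so the lemma applies again to fresh $A,D$ registers. Your induction simply makes this explicit, and your closing remark about why basis-state (rather than merely reduced-state) preservation is essential echoes the paper's own Remark~\ref{rem:superposed-memories}.
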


\begin{remark}[Gram-matrix viewpoint]\label{rem:gram}
Conditioned on a fixed but unknown $f$, the experimenter receives pure states $\ket{\psi_f}$ whose pairwise overlaps are captured by the Gram matrix $G_{f,g}=\braket{\psi_f}{\psi_g}$.
Gram-matrix methods and related ``square-root''/pretty-good measurements are a standard toolbox for analyzing (near-)symmetric state ensembles~\cite{EldarForney01}.
For a complementary single-query \emph{identification} perspective (recovering $f$ within structured families), see~\cite{BohacClosedForm25}.
In this paper we take the complementary \emph{ensemble-state} route: for a bias class we study the induced mixture $\rho_p=\mathbb{E}_f[\ket{\psi_f}\!\bra{\psi_f}]$, whose symmetry forces a two-eigenspace form and makes the Helstrom test explicit.
\end{remark}

\begin{remark}[Superposed memories]\label{rem:superposed-memories}
If the memory register is prepared in a superposition over truth tables, e.g.\ $\sum_f c_f \ket{\Mf}_M$, then a U-QRAM query generally entangles $A$ and $M$ and the conditional i.i.d.\ picture is no longer automatic.
Analyzing such genuinely quantum-memory effects is a natural direction beyond the present paper.
\end{remark}

\section{Bias Classes and Ensemble States}
\label{sec:bias}

\subsection{Bias as a Function Property}

\begin{definition}[Bias and phase-bias]
For $f:[N]\to\{0,1\}$, define:
\begin{itemize}
    \item The \textbf{bias}: $p(f) := \frac{1}{N} \sum_{a} f(a) = \frac{|f^{-1}(1)|}{N}$.
    \item The \textbf{phase-bias}: $\mu(f) := \frac{1}{N} \sum_a (-1)^{f(a)} = 1 - 2p(f)$.
\end{itemize}
\end{definition}

\begin{remark}[Complement symmetry forces $\mu^2$]
Let $(1-f)(a)=1-f(a)$. Then $\ket{\psi_{1-f}}=-\ket{\psi_f}$, so the induced ensembles for weights $m$ and $N-m$ are identical.
Consequently, any discrimination task between exact-weight classes in this ensemble model can depend on $p$ only through $|\mu|$ (equivalently $\mu^2$).
\end{remark}

\begin{definition}[Exact-weight bias class]
For $m \in \{0, 1, \ldots, N\}$, define
\[
    \calF(m) := \big\{ f:[N]\to\{0,1\} : |f^{-1}(1)| = m \big\}.
\]
This is the set of all Boolean functions with exactly $m$ ones (Hamming weight $m$),
with $|\calF(m)| = \binom{N}{m}$.
\end{definition}

\subsection{Induced ensemble state}

Under the hypothesis $H_m$ we assume $f \sim \mathrm{Unif}(\calF(m))$ is sampled once, the memory is prepared in the corresponding basis state $\ket{M_f}$, and then held fixed.
A single query produces (conditioned on $f$) the address pure state $\ket{\psi_f}$; averaging over the unknown $f$ within the class yields the induced single-copy ensemble state
\begin{equation}
    \rhop := \E_{f \sim \calF(m)} \big[ \ket{\psif}\bra{\psif} \big],
\end{equation}
where $p=m/N$.

\begin{remark}[Ensemble/Bayesian setting]
Throughout, our discrimination results are for the Bayesian/ensemble model induced by the uniform prior over $\calF(m)$.
Worst-case discrimination over all $f\in\calF(m)$ is a distinct problem.
\end{remark}

\begin{theorem}[Ensemble state structure]
\label{thm:ensemble-state}
For the exact-weight bias class $\calF(m)$ with uniform prior, the induced ensemble state is
\begin{equation}
\label{eq:rho-p}
    \rhop = \mu^2 \ket{+}\bra{+} + \frac{1 - \mu^2}{N-1} \Piperp,
\end{equation}
where $\mu = 1 - 2p$, $p = m/N$, and $\Piperp = \Id - \ket{+}\bra{+}$ (with $\Id$ the identity on the address space).
\end{theorem}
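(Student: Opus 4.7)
The plan is to exploit the $S_N$-symmetry of the uniform prior on $\mathcal{F}(m)$ to force $\rhop$ into a two-eigenspace form, and then read off the two eigenvalues from a single matrix element together with the trace condition.

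First, I would observe that for any permutation $\sigma\in S_N$ of $[N]$, the map $f\mapsto f\circ\sigma^{-1}$ preserves Hamming weight and hence leaves $\mathrm{Unif}(\mathcal{F}(m))$ invariant. Letting $U_\sigma\ket{a}=\ket{\sigma(a)}$ denote the induced permutation unitary on the address register, \eqref{eq:address-state} gives immediately $U_\sigma\ket{\psif}=\ket{\psi_{f\circ\sigma^{-1}}}$, so $U_\sigma\rhop U_\sigma^\dagger=\rhop$ for every $\sigma\in S_N$.

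Next I would invoke the standard decomposition $\C^N=\mathrm{span}\{\ket{+}\}\oplus\ket{+}^{\perp}$ of the defining $S_N$-representation into the trivial line and the $(N-1)$-dimensional standard irrep. By Schur's lemma, any operator commuting with all $U_\sigma$ acts as a scalar on each component, so $\rhop=\alpha\ket{+}\bra{+}+\beta\,\Piperp$ for some $\alpha,\beta\geq 0$.

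To fix the scalars, I would compute $\alpha=\bra{+}\rhop\ket{+}=\E_f|\braket{+|\psif}|^2$. From \eqref{eq:address-state}, $\braket{+|\psif}=\frac{1}{N}\sum_a(-1)^{f(a)}=\mu(f)$, which equals $\mu$ identically on $\mathcal{F}(m)$; hence $\alpha=\mu^2$. The trace condition $1=\alpha+(N-1)\beta$ then yields $\beta=(1-\mu^2)/(N-1)$, which is \eqref{eq:rho-p}. A purely computational alternative (expanding $\ket{\psif}\bra{\psif}=\frac{1}{N}\sum_{a,b}(-1)^{f(a)+f(b)}\ket{a}\bra{b}$ in the computational basis, evaluating $\E_f[(-1)^{f(a)+f(b)}]$ for $a\ne b$ by a standard sampling-without-replacement count, and then using $\sum_{a\ne b}\ket{a}\bra{b}=N\ket{+}\bra{+}-\Id$) reaches the same conclusion. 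I expect no real obstacle in either route; the only subtlety is cleanly identifying the two $S_N$-irreducible components, which is standard.
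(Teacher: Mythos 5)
Your proposal is correct and follows essentially the same route as the paper: permutation invariance of the uniform prior on $\calF(m)$, Schur's lemma applied to the decomposition $\C^N=\mathrm{span}\{\ket{+}\}\oplus\ker(\bra{+})$, and then the two constraints $\bra{+}\rhop\ket{+}=\mu^2$ and $\Tr(\rhop)=1$ to fix the scalars. The only difference is cosmetic --- you parametrize directly as $\alpha\ket{+}\bra{+}+\beta\Piperp$ rather than $\alpha\Id+\beta\ket{+}\bra{+}$, which makes the eigenvalues slightly more immediate but changes nothing of substance.
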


\begin{proof}
Let $\mathsf{S}_N$ act on the address basis by permuting labels: for $\pi\in\mathsf{S}_N$, let $U_\pi\ket{a}=\ket{\pi(a)}$.
Because the prior over $\calF(m)$ is uniform and invariant under relabeling of addresses, we have
$\rhop = U_\pi \rhop U_\pi^\dagger$ for all $\pi$.
Therefore $\rhop$ commutes with the full permutation representation on $\C^N$.

The representation decomposes as $\C^N = \mathrm{span}\{\ket{+}\} \oplus \ker(\bra{+})$, and any operator commuting with all $U_\pi$ must act as a scalar on each irreducible component.
This is an instance of Schur's lemma for the permutation representation (see, e.g.,~\cite{FH91}).
Hence there exist scalars $\alpha,\beta$ such that
\begin{equation}
\label{eq:rho-alpha-beta}
    \rhop = \alpha \Id + \beta \ket{+}\bra{+}.
\end{equation}

We determine $\alpha,\beta$ from two constraints.
First, $\Tr(\rhop)=1$ gives $\alpha N + \beta = 1$.
Second,
\[
    \bra{+}\rhop\ket{+} = \E_f\big[|\braket{+}{\psif}|^2\big].
\]
But by \eqref{eq:address-state},
\[
    \braket{+}{\psif} = \frac{1}{N}\sum_{a=0}^{N-1} (-1)^{f(a)} = \mu,
\]
so $\bra{+}\rhop\ket{+} = \mu^2$.
On the other hand, \eqref{eq:rho-alpha-beta} implies $\bra{+}\rhop\ket{+} = \alpha + \beta$.
Thus $\alpha + \beta = \mu^2$.

Solving yields
\[
    \alpha = \frac{1-\mu^2}{N-1},\qquad \beta = \frac{N\mu^2 - 1}{N-1}.
\]
Rewriting in terms of $\Piperp = \Id-\ket{+}\bra{+}$ gives \eqref{eq:rho-p}.
\end{proof}

\begin{corollary}[Two-eigenspace structure]
\label{cor:eigenstructure}
The induced ensemble state $\rho_p$ has:
\begin{itemize}
    \item eigenvalue $\lambda_+ = \mu^2$ with multiplicity 1 (eigenvector $\ket{+}$);
    \item eigenvalue $\lambda_\perp = \frac{1-\mu^2}{N-1}$ with multiplicity $N-1$ (eigenspace $\ker(\bra{+})$).
\end{itemize}
In particular, $\rho_p$ and $\rho_{p'}$ commute for all $p,p'$.
\end{corollary}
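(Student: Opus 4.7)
The plan is to read the eigenstructure directly off the explicit form \eqref{eq:rho-p} established in Theorem~\ref{thm:ensemble-state}, and then deduce commutativity from the fact that the spectral projectors are independent of $p$.

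First, I would note that $\ket{+}\bra{+}$ and $\Piperp = \Id - \ket{+}\bra{+}$ are orthogonal projectors summing to $\Id$, with $\mathrm{rank}(\ket{+}\bra{+})=1$ and $\mathrm{rank}(\Piperp)=N-1$. Hence \eqref{eq:rho-p} is already a spectral decomposition of $\rhop$. Concretely, I would verify this by two one-line computations: applying $\rhop$ to $\ket{+}$ gives $\mu^2\ket{+}$ since $\Piperp\ket{+}=0$; and applying $\rhop$ to any $\ket{\phi}\in\ker(\bra{+})$ gives $\frac{1-\mu^2}{N-1}\ket{\phi}$ since $\ket{+}\bra{+}\phi=0$ and $\Piperp\ket{\phi}=\ket{\phi}$. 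The two multiplicities $1$ and $N-1$ exhaust the full $N$-dimensional address space, so no further eigenvalues exist.

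For the commutativity claim, the key observation is that the spectral projectors $\ket{+}\bra{+}$ and $\Piperp$ do not depend on $p$; only the eigenvalues $\mu^2$ and $\frac{1-\mu^2}{N-1}$ do. Thus every $\rhop$ is a function of the same pair of commuting projectors, so $\rho_p\rho_{p'}=\rho_{p'}\rho_p$ follows immediately. Equivalently, all $\rhop$ are simultaneously diagonal in any orthonormal basis obtained by adjoining $\ket{+}$ to an orthonormal basis of $\ker(\bra{+})$.

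There is no substantive obstacle: the corollary is essentially a restatement of Theorem~\ref{thm:ensemble-state} in spectral language, and the only content is matching the two projectors to their eigenvalues and noting that the projectors are $p$-independent. The one mild point worth flagging is the degenerate case $\mu^2 = \frac{1-\mu^2}{N-1}$, i.e.\ $\mu^2 = 1/N$, where $\rhop = \frac{1}{N}\Id$ is maximally mixed and the two eigenspaces coalesce; this does not affect commutativity and is consistent with the stated multiplicities as a limit.
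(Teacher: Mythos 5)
Your proposal is correct and matches the paper's (implicit) argument exactly: the corollary is stated without a separate proof precisely because it reads the spectral decomposition directly off Eq.~\eqref{eq:rho-p}, with commutativity following from the $p$-independence of the projectors $\ket{+}\bra{+}$ and $\Piperp$. Your remark about the degenerate case $\mu^2=1/N$ is a reasonable extra observation but does not change anything of substance.
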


\section{Binary Discrimination Between Bias Classes}
\label{sec:discrimination}

Consider two bias classes with parameters $p_0 = m_0/N$ and $p_1 = m_1/N$, with equal prior probability $1/2$ each.

\begin{theorem}[Optimal single-copy discrimination]
\label{thm:discrimination}
For binary discrimination between $\rho_{p_0}$ and $\rho_{p_1}$ with equal priors:
\begin{enumerate}
    \item[(a)] The \textbf{trace distance} is $D(\rho_{p_0}, \rho_{p_1}) = \Delta$, where
    $D(\rho,\sigma)=\tfrac12\|\rho-\sigma\|_1$ and
    \begin{equation}
        \Delta := |\mu_0^2 - \mu_1^2| = |(1-2p_0)^2 - (1-2p_1)^2|.
    \end{equation}

    \item[(b)] An optimal Helstrom measurement is the two-outcome POVM $\{\ket{+}\bra{+}, \Piperp\}$:
    output the hypothesis with larger $\mu^2$ upon outcome $\ket{+}$, and the other upon outcome $\perp$.
    Equivalently: apply $H^{\otimes n}$, measure in the computational basis, and decide based on whether the outcome is $\ket{0^n}$.

    \item[(c)] The \textbf{Helstrom success probability} is
    \begin{equation}
        P_{\mathrm{succ}} = \frac{1 + \Delta}{2}.
    \end{equation}
\end{enumerate}
\end{theorem}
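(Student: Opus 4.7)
The plan is to exploit the two-eigenspace structure from Corollary~\ref{cor:eigenstructure}: since $\rho_{p_0}$ and $\rho_{p_1}$ are both diagonal in the common decomposition $\C^N=\mathrm{span}\{\ket{+}\}\oplus\ker(\bra{+})$, they commute, and the Helstrom analysis collapses to the minimum-error discrimination of a pair of two-outcome classical distributions supported on $\{\ket{+}\bra{+},\Piperp\}$.

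For (a), I would write the difference operator directly from \eqref{eq:rho-p} as
\[
\rho_{p_0}-\rho_{p_1}=(\mu_0^2-\mu_1^2)\,\ket{+}\bra{+}-\frac{\mu_0^2-\mu_1^2}{N-1}\,\Piperp,
\]
read off its eigenvalues (one eigenvalue $\mu_0^2-\mu_1^2$ on $\ket{+}$ and an $(N-1)$-fold eigenvalue $-(\mu_0^2-\mu_1^2)/(N-1)$ on $\ker(\bra{+})$), and sum absolute values to obtain $\|\rho_{p_0}-\rho_{p_1}\|_1=2|\mu_0^2-\mu_1^2|$, whence $D(\rho_{p_0},\rho_{p_1})=\Delta$.

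For (b), I would invoke the standard fact that the Helstrom projector equals the projector onto the positive part of $\rho_{p_0}-\rho_{p_1}$. Assuming without loss of generality $\mu_0^2>\mu_1^2$, this positive part is exactly $\ket{+}\bra{+}$ and its complement is $\Piperp$, yielding the stated two-outcome POVM; the equivalent implementation via $H^{\otimes n}$ then follows from $H^{\otimes n}\ket{+}=\ket{0^n}$.

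Part (c) is then the textbook Helstrom identity $P_{\mathrm{succ}}=\tfrac12\bigl(1+D(\rho_{p_0},\rho_{p_1})\bigr)$ combined with (a). I do not expect a genuinely hard step: the real content has already been absorbed into Theorem~\ref{thm:ensemble-state}, which forces both induced states to live in the same two-dimensional commuting family. The only mild bookkeeping concerns the degenerate case $\mu_0^2=\mu_1^2$, where $\Delta=0$ and the two hypotheses become information-theoretically identical — consistent with the complement symmetry $\ket{\psi_{1-f}}=-\ket{\psi_f}$ built into the model.
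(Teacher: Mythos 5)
Your proposal is correct and follows essentially the same route as the paper: both compute the difference operator $\rho_{p_0}-\rho_{p_1}$ from the two-eigenspace form of Theorem~\ref{thm:ensemble-state}, sum the absolute eigenvalues to get $\|\rho_{p_0}-\rho_{p_1}\|_1=2\Delta$, identify the Helstrom projector with the positive part (here $\ket{+}\bra{+}$), and conclude via $P_{\mathrm{succ}}=\tfrac12(1+D)$. Your explicit handling of the degenerate case $\mu_0^2=\mu_1^2$ is a small but welcome addition the paper leaves implicit.
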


\begin{proof}
By Corollary~\ref{cor:eigenstructure}, $\rho_{p_0}$ and $\rho_{p_1}$ commute and share the eigenbasis $\{\ket{+}\}\cup\ker(\bra{+})$.
For commuting states, the Helstrom optimal measurement is diagonal in the shared eigenbasis, with each projector assigned to the hypothesis with larger eigenvalue.

Moreover,
\[
    \rho_{p_0} - \rho_{p_1} = (\mu_0^2 - \mu_1^2)\ket{+}\bra{+}
    - \frac{\mu_0^2 - \mu_1^2}{N-1}\Piperp,
\]
so the trace norm is
\[
    \|\rho_{p_0} - \rho_{p_1}\|_1
    = |\mu_0^2-\mu_1^2|\Big(1+(N-1)\cdot\frac1{N-1}\Big)
    = 2\Delta.
\]
Thus $D=\Delta$ and $P_{\mathrm{succ}}=\tfrac12(1+D)=\tfrac12(1+\Delta)$.
\end{proof}

\paragraph{Operational recipe (single query).}
\begin{center}
\fbox{\begin{minipage}{0.95\linewidth}
\textbf{Input:} one U-QRAM query access to $\ket{M_f}$ (held fixed), and a fresh address register $A$.
\begin{enumerate}
  \item Prepare $A$ in $\ket{+}$ and $D$ in $\ket{-}$.
  \item Apply $\Uqram$ once to obtain $\ket{\psi_f}$ on $A$ (Lemma~\ref{lem:reduction}).
  \item Apply $H^{\otimes n}$ to $A$ and measure in the computational basis.
  \item If the outcome is $0^n$ (projecting onto $\ket{+}$), decide for the hypothesis with larger $\mu^2$; otherwise decide for the other.
\end{enumerate}
\end{minipage}}
\end{center}

\begin{remark}[Special cases]
\begin{enumerate}
    \item \textbf{Deutsch--Jozsa}~\cite{DeutschJozsa92}: $p_0 = 0$ (constant) vs.\ $p_1 = 1/2$ (balanced).
    Then $\mu_0 = 1$, $\mu_1 = 0$, $\Delta = 1$, $P_{\mathrm{succ}} = 1$ (perfect discrimination).
    \item \textbf{Symmetric biases}: $p_0 = p$ vs.\ $p_1 = 1-p$.
    Then $\mu_0^2=\mu_1^2$, $\Delta=0$, and $P_{\mathrm{succ}}=1/2$ (no discrimination in the ensemble model).
\end{enumerate}
\end{remark}

\section{Multi-Query Performance Under Persistent Memory}
\label{sec:multi-query}

After $t$ queries with the memory held fixed, the experimenter receives (conditioned on $f$) the pure product state $\ket{\psi_f}^{\otimes t}$ on $t$ fresh address registers.
Averaging over $f$ under the hypothesis yields the $t$-copy induced state
\begin{equation}
    \rho_p^{(t)} := \E_{f\sim\calF(m)}\Big[\big(\ket{\psi_f}\bra{\psi_f}\big)^{\otimes t}\Big],
\end{equation}
which need not factor as $\rho_p^{\otimes t}$ in general.

\begin{theorem}[Achievable separable multi-query strategy]
\label{thm:multi-query}
Fix two hypotheses $p_0,p_1$ with $\mu_i=1-2p_i$ and let $t\ge 1$.
Consider the separable measurement that applies the two-outcome POVM $\{\ket{+}\bra{+},\Piperp\}$ independently to each of the $t$ address registers, and then performs a likelihood-ratio test on the resulting count.
Then:
\begin{enumerate}
    \item[(a)] For any $f\in\calF(m)$, one has $\braket{+}{\psi_f}=\mu$, hence the single-shot outcome probability
    \[
      \Pr(+\mid f)=|\braket{+}{\psi_f}|^2=\mu^2
    \]
    is constant across the class.
    \item[(b)] Consequently, under hypothesis $p$ the number $k$ of ``$+$'' outcomes is exactly
    \[
      k\sim\mathrm{Binomial}(t,\mu^2),
    \]
    even though $\rho_p^{(t)}$ may be correlated.
    \item[(c)] For equal priors on $p_0,p_1$, the likelihood-ratio test achieves an error probability bounded by
    \begin{equation}
      P_{\mathrm{err}}(t)\le \tfrac12 e^{-t\,\xi_{\mathrm{cl}}},
    \end{equation}
    where $\xi_{\mathrm{cl}}$ is the classical Chernoff information for distinguishing
    $\mathrm{Bernoulli}(\mu_0^2)$ from $\mathrm{Bernoulli}(\mu_1^2)$:
    \begin{equation}
        \xi_{\mathrm{cl}}
        = -\log \min_{s\in[0,1]}\Big[(\mu_0^2)^s(\mu_1^2)^{1-s} + (1-\mu_0^2)^s(1-\mu_1^2)^{1-s}\Big],
    \end{equation}
    interpreted by continuity when a parameter equals $0$ or $1$.
\end{enumerate}
\end{theorem}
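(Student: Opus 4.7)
The plan is to handle the three parts sequentially, since (b) plugs (a) into the conditional-i.i.d.\ structure of Corollary~\ref{cor:iid}, and (c) reduces to a textbook classical Chernoff bound once (b) has converted the problem to Bernoulli trials. The key conceptual observation, worth isolating up front, is that the \emph{exact}-weight assumption collapses the within-class distribution of $\braket{+}{\psi_f}$ to a single point, so that the quantum measurement statistics coincide with an i.i.d.\ Bernoulli source with no residual mixture over the class.

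For part (a) I would start from \eqref{eq:address-state} and compute
\[
\braket{+}{\psi_f} = \frac{1}{N}\sum_{a=0}^{N-1}(-1)^{f(a)} = \mu(f).
\]
Since every $f\in\calF(m)$ has exactly $m$ ones, $\mu(f)=1-2m/N=\mu$ is constant on the class, so $|\braket{+}{\psi_f}|^2=\mu^2$ holds \emph{pointwise}, not merely in expectation.

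For part (b) I would fix $f$ and invoke Corollary~\ref{cor:iid}: the $t$ address registers are in the product state $\ket{\psi_f}^{\otimes t}$. Applying the two-outcome POVM $\{\ket{+}\bra{+},\Piperp\}$ independently to each register then produces $t$ conditionally independent Bernoulli trials with success probability $\mu^2$ by part (a). Thus $k\mid f\sim\mathrm{Binomial}(t,\mu^2)$. The essential move is to notice that this conditional distribution has no $f$-dependence within $\calF(m)$, so marginalizing over $f\sim\mathrm{Unif}(\calF(m))$ leaves the Binomial unchanged. I would emphasize that the correlations potentially present in $\rho_p^{(t)}$ are irrelevant for this specific measurement: the class symmetry, not a factorization of the state, is what produces exact i.i.d.\ data.

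For part (c) the problem is now a classical binary hypothesis test between $\mathrm{Bernoulli}(\mu_0^2)$ and $\mathrm{Bernoulli}(\mu_1^2)$ with $t$ i.i.d.\ samples and equal priors, so I would simply invoke the standard Chernoff bound for the Bayes error of the likelihood-ratio test, giving $P_{\mathrm{err}}(t)\le\tfrac12 e^{-t\xi_{\mathrm{cl}}}$ with $\xi_{\mathrm{cl}}=-\log\min_{s\in[0,1]}\sum_{x\in\{0,1\}}P_0(x)^s P_1(x)^{1-s}$, specialized to the two Bernoullis, citing e.g.\ a standard reference on hypothesis testing. The boundary cases ($\mu_i^2\in\{0,1\}$) would be handled by the usual $0\log 0=0$ continuity convention, which is consistent with perfect or trivial discrimination. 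I do not expect any genuinely hard step here: once part (a) removes the intra-class variance, parts (b) and (c) are a direct application of conditional independence and the classical Chernoff bound, and the only point needing care is the explicit remark that exact i.i.d.\ reduction is a feature of the exact-weight symmetry and would fail for approximate-weight classes.
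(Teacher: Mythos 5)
Your proposal is correct and follows essentially the same route as the paper's proof: compute $\braket{+}{\psi_f}=\mu$ pointwise on $\calF(m)$ from Eq.~\eqref{eq:address-state}, use Corollary~\ref{cor:iid} for conditional independence so that $k\mid f\sim\mathrm{Binomial}(t,\mu^2)$ with no $f$-dependence, and then invoke the classical Chernoff bound for the Bernoulli likelihood-ratio test. Your added emphasis that the exact-weight symmetry (rather than any factorization of $\rho_p^{(t)}$) is what yields exact i.i.d.\ data is a correct and useful gloss on the same argument.
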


\begin{proof}
Item (a) follows from \eqref{eq:address-state}:
\[
\braket{+}{\psi_f}=\frac1N\sum_{a=0}^{N-1}(-1)^{f(a)}=\mu,
\]
which depends only on the weight $m$ (equivalently $p$), not on the specific $f\in\calF(m)$.

Conditioned on $f$, the $t$ measured outcomes are independent (Corollary~\ref{cor:iid}) and each has success probability $\mu^2$, so $k\mid f\sim\mathrm{Binomial}(t,\mu^2)$.
Since this conditional law does not depend on $f$ within the class, the unconditional law under hypothesis $p$ is also $\mathrm{Binomial}(t,\mu^2)$, proving (b).

Finally, the likelihood-ratio test between two i.i.d.\ Bernoulli models achieves an exponentially decaying error bounded in terms of the Chernoff information~\cite{Chernoff52}, yielding (c).
\end{proof}

\begin{corollary}[Scaling for small bias gap (separable strategy)]
To distinguish $p_0=1/2$ from $p_1=1/2+\varepsilon$ with constant success probability using the separable strategy of Theorem~\ref{thm:multi-query}, it suffices to take
\[
t=\Theta(1/\varepsilon^2).
\]
In the special case $p_0=1/2$ (so $\mu_0^2=0$), the likelihood-ratio test reduces to ``decide $p_1$ iff $k\ge 1$,'' and the error probability is
$P_{\mathrm{err}}(t)=\tfrac12(1-\mu_1^2)^t$.
\end{corollary}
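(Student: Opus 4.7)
The plan is to specialize Theorem~\ref{thm:multi-query} to the hypotheses $p_0=1/2$ and $p_1=1/2+\varepsilon$ and then read off both the scaling and the explicit form of the test from the resulting closed-form binomial error. First I would compute $\mu_0 = 1 - 2p_0 = 0$ and $\mu_1 = 1 - 2p_1 = -2\varepsilon$, so that $\mu_0^2 = 0$ and $\mu_1^2 = 4\varepsilon^2$. By Theorem~\ref{thm:multi-query}(b), the count $k$ of $\ket{+}$ outcomes is therefore identically $0$ under $H_0$ and distributed as $\mathrm{Binomial}(t, 4\varepsilon^2)$ under $H_1$.

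Next I would describe the likelihood-ratio test directly from these two binomial laws. The ratio $\Pr(k \mid H_1)/\Pr(k \mid H_0)$ is $+\infty$ for every $k \ge 1$ (since the denominator vanishes) and equals $(1-4\varepsilon^2)^t < 1$ at $k = 0$, so with equal priors the Bayes-optimal rule collapses to ``decide $H_1$ iff $k \ge 1$,'' as claimed. The type-I error vanishes because $k = 0$ almost surely under $H_0$, while the type-II error is $\Pr(k=0 \mid H_1) = (1-4\varepsilon^2)^t = (1-\mu_1^2)^t$; averaging with equal priors gives the advertised identity $P_{\mathrm{err}}(t) = \tfrac12(1-\mu_1^2)^t$.

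To extract the $\Theta(1/\varepsilon^2)$ scaling I would use this exact identity both ways. Imposing $P_{\mathrm{err}}(t) \le c$ for any fixed $c \in (0, 1/2)$ is equivalent to $t \ge \log(1/(2c))/(-\log(1-4\varepsilon^2))$, and the expansion $-\log(1-x) = x + O(x^2)$ as $x \to 0$ yields $t = O(1/\varepsilon^2)$. Conversely, if $t = o(1/\varepsilon^2)$ then $(1-4\varepsilon^2)^t \to 1$ and $P_{\mathrm{err}}(t) \to 1/2$, so constant success probability is not achieved; hence $t = \Omega(1/\varepsilon^2)$ for this test.

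No real obstacle arises in this argument; the only subtle point worth flagging is that the Chernoff information appearing in Theorem~\ref{thm:multi-query}(c) must be interpreted by continuity at $\mu_0^2 = 0$. One can check that as $s \to 0^+$ the quantity $(\mu_0^2)^s(\mu_1^2)^{1-s} + (1-\mu_0^2)^s(1-\mu_1^2)^{1-s}$ tends to $1-\mu_1^2$, so that $\xi_{\mathrm{cl}} = -\log(1-\mu_1^2)$ and the generic Chernoff bound of Theorem~\ref{thm:multi-query}(c) agrees exactly with the sharp $\tfrac12(1-\mu_1^2)^t$ formula rather than being loose.
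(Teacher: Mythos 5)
Your proof is correct and follows exactly the route the paper intends (the corollary is stated without proof there): specialize Theorem~\ref{thm:multi-query} to $\mu_0^2=0$, $\mu_1^2=4\varepsilon^2$, observe that the likelihood-ratio test degenerates to ``decide $p_1$ iff $k\ge 1$'' with exact error $\tfrac12(1-\mu_1^2)^t$, and expand $-\log(1-4\varepsilon^2)$ to get $t=\Theta(1/\varepsilon^2)$. Your additional check that the continuity-interpreted Chernoff information $\xi_{\mathrm{cl}}=-\log(1-\mu_1^2)$ makes the generic bound tight here is a correct and worthwhile observation.
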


\begin{remark}[Open question: do collective measurements help?]
The $t$-copy induced state $\rho_p^{(t)}$ generally does not factor as $\rho_p^{\otimes t}$, so in principle a collective Helstrom measurement on $\rho_{p_0}^{(t)}$ vs.\ $\rho_{p_1}^{(t)}$ could outperform any separable strategy.
An interesting open problem is to characterize the optimal error exponent for distinguishing these \emph{exchangeable} but potentially correlated mixtures, and to determine whether the ``$\ket{+}$-count'' statistic is sufficient for optimal discrimination in the exact-weight prior model.
\end{remark}

\begin{remark}[Comparison with coherent amplitude estimation]\label{rem:amp-est}
If one allows fully coherent algorithms with repeated controlled uses of the phase oracle $P_f$ (implemented via $\Uqram$ and phase kickback), amplitude estimation can estimate $\mu=\E_a[(-1)^{f(a)}]$ with query complexity scaling as $O(1/\varepsilon)$ in relevant regimes~\cite{BHMT00}.
This comparison assumes the standard query-model capability of \emph{controlled} uses of $P_f$ (or an equivalent controlled-$\Uqram$ construction), as required by amplitude estimation.
Thus the $\Theta(1/\varepsilon^2)$ scaling above should be interpreted as the complexity of the \emph{separable copy-measurement strategy}, not as a general quantum query lower bound.
\end{remark}

\section{Discussion}
\label{sec:discussion}

\subsection{What U-QRAM Contributes}

The discrimination results above could be stated in terms of abstract phase oracles. However, the U-QRAM framing provides:
\begin{enumerate}
    \item \textbf{Operational semantics}: queries are interactions with a fixed, known device, not black-box access to an unknown unitary.
    \item \textbf{Memory persistence}: the hypothesis is ``which memory state?'' and the memory register is preserved by $\Uqram$.
    \item \textbf{Restricted-access viewpoint}: the experimenter never accesses $\ket{M_f}$ directly; the accessible information is mediated by the fixed interface.
\end{enumerate}

\subsection{Relation to Deutsch--Jozsa and Bernstein--Vazirani}

Deutsch--Jozsa~\cite{DeutschJozsa92} is the special case $p_0 = 0$ vs.\ $p_1 = 1/2$ with $\Delta = 1$ (perfect discrimination in the ensemble model).
Bernstein--Vazirani~\cite{BernsteinVazirani97} concerns \emph{identifying} a specific function within a structured family, not discriminating between bias classes; our results are complementary.

\subsection{Limitations and Future Work}

The exact-weight bias class has maximal permutation symmetry, yielding commuting ensemble states and a two-eigenspace form.
Relaxing to approximate-bias classes, noisy QRAM calls, or non-uniform priors breaks this structure and requires more general analyses.
Additionally, for multi-query discrimination under persistent memory, it remains open whether collective measurements across queries can improve the achievable error exponent beyond the explicit separable strategy analyzed here.

\subsection{Scientific Contribution}

This work makes explicit a clean ``restricted-access discrimination'' viewpoint for Universal QRAM: the unknown is not an oracle but a \emph{persisting memory state} accessed only through a fixed, data-independent interface.
Within this model, we introduce bias classes as a first nontrivial hypothesis family and derive the induced single-query ensemble state $\rho_p$ in closed form for exact-weight truth tables, using permutation symmetry to show it has only two eigenspaces.
This yields an immediately implementable Helstrom-optimal single-copy test (and an explicit success probability) that quantitatively characterizes how much class-level information about a Boolean memory is accessible through U-QRAM queries.
More broadly, the paper provides a bridge between Universal QRAM constructions and the quantum state-discrimination toolbox, setting up a systematic path to study less symmetric families, non-uniform priors, noise, and genuinely quantum-memory effects beyond basis-encoded tables.


\end{document}